\newtheorem{theorem}{Theorem}
\DeclareMathOperator{\expit}{expit}
\DeclareMathOperator{\logit}{logit}
\newcommand{\pt}{\mbox{$p_0$}}
\newcommand{\Pt}{\mbox{$P_0$}}
\newcommand{\indep}{\mbox{$\perp\!\!\!\perp$}}
\renewenvironment{proof}{{\it Proof }}{\qed \\}
\title{Targeted Maximum Likelihood
  Estimation using Exponential Families}
\author[1]{Iv\'an  D\'iaz\thanks{idiaz@jhu.edu}}
\author[1]{Michael Rosenblum\thanks{mrosenbl@jhsph.edu}}
\affil[1]{\small Department of Biostatistics, Johns Hopkins Bloomberg School
  of Public Health.}
\begin{document}\maketitle
\onehalfspacing

\begin{abstract}
  Targeted maximum likelihood estimation (TMLE) is a general method for
  estimating parameters in semiparametric and nonparametric
  models. Each iteration of TMLE involves fitting a parametric
  submodel that targets the parameter of interest.
  We investigate the use of exponential families to
  define the parametric submodel. This implementation of TMLE gives a general
  approach for estimating any smooth parameter in the nonparametric model.
  A computational advantage of this approach is that each iteration of TMLE involves estimation of a parameter
  in an exponential family, which is a convex optimization problem for which software implementing reliable and computationally efficient methods
  exists. We illustrate the method in three estimation problems, involving
  the mean of an outcome missing at random, the parameter of a median
  regression model, and the causal effect of a continuous
  exposure, respectively.
 We conduct a simulation study comparing different choices for the parametric submodel, focusing on the first of these problems.
To the best of our knowledge, this is the first study investigating robustness of TMLE to
different specifications of the parametric submodel. We find that the
choice of submodel can have an important impact on the behavior of the
estimator in finite samples.\\
Key words: TMLE, Exponential family, Convex optimization.
\end{abstract}

\section{Introduction}\label{intro}

Targeted maximum likelihood estimation \cite[TMLE,][]{vanderLaan&Rubin06,
  vanderLaanRose11} is a general template for
estimation in semiparametric or nonparametric models.
The key to each update step of TMLE is to specify and fit a parametric submodel with certain properties, described in detail below.
(A parametric submodel is defined as a parametric model that is contained in the overall model.) Throughout this paper, the overall model is a nonparametric model.
Therefore, many choices are available for the form of the parametric submodel.

We investigate the performance of TMLE
when the parametric submodel is chosen to be from an exponential
family. A computational advantage of this choice is that because the parametrization as well
as the parameter space of an exponential family are always convex
\citep{boyd2004}, standard methods for optimization can be applied to solve this
problem. Another advantage of this approach is that it can be applied to a wide variety of estimation problems.
Specifically, it can be used to estimate any smooth parameter defined in the nonparametric model, under conditions described below.

We demonstrate how to implement TMLE using exponential families
in the following three estimation problems:
\begin{enumerate}
\item Estimating the mean of an outcome missing at random, where covariates are observed for the entire sample.
\item Estimating a nonparametric extension of the parameter in a median regression model. 
\item Estimating the causal effect of a continuous-valued exposure.
\end{enumerate}

We conduct a simulation study comparing different choices for the parametric submodel, focusing on the first of these problems.

We next present an overview of the TMLE template, and illustrate the implementation of TMLE for each of the
three problems above.  We conclude with a
discussion of practical issues and directions for future research.

\section{Targeted maximum likelihood template}\label{tmle}
Let the random vector representing what is observed on an experimental unit be denoted by $O$, with sample space $\mathcal{O}$.
Let $\{O_1,\ldots,O_n\}$ be an independent, identically distributed sample
of observations $O$, each drawn from the unknown, true distribution $\Pt$. We assume that $\Pt\in \mathcal M$, where
$\mathcal M$ is the nonparametric model, defined as the class of all distributions having a continuous density with respect to
a dominating measure $\nu$.
Let $\mathcal{M}'$ denote the class of all densities corresponding to a distribution in $\mathcal{M}$, and let $p_0$ denote the density corresponding to $\Pt$.
Let $\Psi(p)$ denote a $d$-dimensional Euclidean parameter with known
efficient influence function $D(\pt,O)$. That is, $\Psi$ is a mapping from $\mathcal{M}$ to $\mathbb{R}^d$ for which
$D(\pt,O)$ is the pathwise derivative, as defined, e.g., in \citet{Bickel97}. We refer to such a parameter as a smooth parameter.
Many commonly used parameters are smooth, including all those in this paper. The efficient influence function, by definition, satisfies
\begin{equation}
  \int D(p,o)p(o)d\nu(o)=0 \mbox{ for all } p \in\mathcal{M}'. \label{effic_property}
\end{equation}


Denote the true value $\Psi(\pt)$ by $\psi_0$. The template for a
targeted maximum likelihood estimator is defined by the following steps:
\begin{enumerate}
\item Construct an initial estimator $p^0$
  of the true, unknown density $\pt$;
\item Construct a sequence of updated density estimates $p^k$, $k=1,2,\dots$. Given the current density estimate $p^k$, the updated density $p^{k+1}$ is constructed by specifying
  a regular parametric submodel $\{p_\epsilon^k:\epsilon \in R\}$ of $\mathcal{M}$, where  $R$  is an open subset of $\mathbb{R}^d$.
  The submodel is required to satisfy two properties. First, it must  equal the current density estimate $p^k$ at
  $\epsilon=0$, i.e., $p_0^k = p^k$. Second, the score of $p_\epsilon^k$ at $\epsilon=0$ must equal the efficient
  influence function for  $\Psi$ at $p^k$, i.e.,
  \begin{equation}
    D(p^k,o) = \frac{d}{d
      \epsilon}\left[\log p_\epsilon^k(o)\right]\bigg|_{\epsilon=0}, \mbox{ for all possible values of } o \in \mathcal{O}.\label{score}
  \end{equation}
  The parameter $\epsilon$  of the submodel $\{p_\epsilon^k:\epsilon \in R\}$ is fit using maximum likelihood estimation, i.e.,
  \begin{equation}
    \hat\epsilon=\arg\max_\epsilon\sum_{i=1}^n\log p^k_\epsilon(O_i), \label{mle}
  \end{equation}
  and
  the updated density $p^{k+1}$ is defined to be the density in the parametric submodel\\ $\{p_\epsilon^k:\epsilon \in \mathbb{R}^d\}$ corresponding to $\hat\epsilon$, i.e.,
  $p^{k+1}=p^k_{\hat\epsilon}$.
\item Iterate the previous step  until convergence, i.e., until $\hat\epsilon\approx 0$. Denote the last step of the procedure by $k=k^*$.
\item Define the TMLE of $\psi_0$ to be
the substitution estimator
  $\hat\psi\equiv \Psi(p^{k^*})$.
\end{enumerate}
The TMLE algorithm above can be generalized in the following ways: one
can let each $p^k$ represent an estimate of only certain components
of the density $p$ (typically those components relevant to estimation
of the parameter $\Psi$ for a specific problem); the parametric
submodel may satisfy a relaxed score condition, in that  the efficient
influence function need only be contained in the linear span of the
score at $\epsilon=0$; or another loss function may be used in place
of the log-likelihood for estimating $\epsilon$ in (\ref{mle}). We
discuss the latter generalization in more detail in Section~\ref{second_onestep}.

The result of the above TMLE procedure is that at the final density estimate $p^{k^*}$, we have
\begin{equation}
  \sum_{i=1}^n D(p^{k^*},O_i) \approx 0, \label{eifee}
\end{equation}
i.e., the final density estimate is a solution to the efficient influence function estimating equation.
This property, combined with the estimator being a substitution
estimator $\hat\psi\equiv \Psi(p^{k^*})$ and the smoothness of the parameter, is fundamental to proving
that TMLE has desirable properties. For example, it is
asymptotically linear with influence function equal
to the efficient influence function under certain assumptions, as described by  \cite{vanderLaan&Rubin06}.

We give a heuristic argument for (\ref{eifee}), which is rigorously
justified under regularity conditions given in Result 1 of \cite{vanderLaan&Rubin06}.
Assume that at the final iteration of TMLE, i.e., the iteration where
$p^{k^*}$ is defined, we have $\hat{\epsilon}=0$. Then by equation (\ref{mle}),
the penultimate density $p^{k^*-1}$ must satisfy
\begin{equation}
  0 = \hat\epsilon=\arg\max_\epsilon\sum_{i=1}^n\log p^{k^*-1}_\epsilon(O_i). \label{mle1}
\end{equation}
If  $\log p^{k^*-1}_\epsilon$ is strictly convex in $\epsilon$, then the derivative at $\epsilon=0$ of the right  side of (\ref{mle1}) equals $0$, which implies
\begin{eqnarray}
  0 & = &\sum_{i=1}^n  \frac{d}{d \epsilon} \log p^{k^*-1}_\epsilon(O_i)\bigg|_{\epsilon=\mathbf{0}} =   \sum_{i=1}^n  D(p^{k^*-1},O_i) = \sum_{i=1}^n  D(p^{k^*},O_i), \label{used_pass_through_at_epsilon_0_property}
\end{eqnarray}
where the second equality follows from the score condition (\ref{score}) and third equality follows since by construction in step 2 we have
$p^{k^*}=p^{k^*-1}_{\hat\epsilon} = p^{k^*-1}_{0} = p^{k^*-1}$. This completes the heuristic argument for (\ref{eifee}).

\section{Implementing TMLE using an exponential family}
The key step in the TMLE algorithm is step 2, which requires a choice of the parametric submodel at each iteration $k$.
Let $p^k$ denote the density at  the current iteration $k$, and consider construction of the parametric model in step 2.
One option is to use a submodel from an
exponential family, represented as
\begin{equation}  p_\epsilon^k(O)  = c(\epsilon, p^k)\exp\{\epsilon
  D(p^k,O)\}p^k(O),\label{expfam1}
\end{equation}
where the normalizing constant $c(\epsilon, p^k) = \left[ \int \exp\{\epsilon
D(p^k,o)\}p^k(o) d\nu(o)\right]^{-1}$. The model is defined for all $\epsilon \in \mathbb{R}^d$ for which the integral in $c(\epsilon, p^k)$ is finite.

A key feature of parametric models of the form (\ref{expfam1}) is that they automatically satisfy the two conditions in step 2 of the TMLE algorithm.
First, by (\ref{expfam1}), we have at $\epsilon=0$ that $p_\epsilon^k  =  p^k$. Second, the score condition (\ref{score}) holds
since we have, for all possible values of  $o \in \mathcal{O}$,
\begin{eqnarray}
  \frac{d}{d\epsilon} \left\{\log p_\epsilon^k (o)\right\}\bigg|_{\epsilon =0} & = &
  \frac{d}{d\epsilon} \left[ \log
    \exp\left\{\epsilon
      D(p^k,o)\right\} \right] \bigg|_{\epsilon =0}+ \frac{d}{d\epsilon} \left\{ \log c(\epsilon, p^k)\right\} \bigg|_{\epsilon =0} \nonumber \\
  & = &
  D(p^k,o) -
  \left[ c(\epsilon, p^k)\int D(p^k,o')\exp\{\epsilon D(p^k,o')\}p^k(o')d\nu(o')\right]\bigg|_{\epsilon =0} \nonumber \\
  &=&
  D(p^k,o) -
 c(0, p^k) \int D(p^k,o')p^k(o')d\nu(o')\nonumber \\
  &=&
  D(p^k,o), \nonumber
\end{eqnarray}
where the second equality follows from exchanging the order of differentiation and integration (justified under smoothness conditions by Fubini's Theorem), and the last equality follows from
$\int D(p^k,o)p^k(o)d\nu(o)=0$ by (\ref{effic_property}).

An advantage of using (\ref{expfam1}) as a submodel is that maximum likelihood
estimation of $\epsilon$ in an exponential family is a convex
optimization problem, which is computationally tractable. In particular, we take advantage of
the various R functions available to solve convex optimization
problems.

In Section~\ref{simula}, we illustrate a different implementation of TMLE that uses parametric
submodels given by
\begin{equation}
  p_\epsilon^k(O)=  c'(\epsilon, p^k)\{1 + \exp[-2\epsilon D(p^k,O)]\}^{-1}p^k(O), \label{ivan_submodel}
\end{equation}
for $c'(\epsilon, p^k)$ a normalizing constant, and which also has a convex log-likelihood function.

In Sections~\ref{ex1}-\ref{ex3}, we describe three estimation problems that can be solved with TMLE. The first problem,
estimating the mean of a variable missing at random, is an example where there exists a TMLE implementation that requires a single iteration and
that can be solved through a logistic regression of the outcome on a
so called ``clever covariate.'' In contrast, the examples in Sections~\ref{ex2}-\ref{ex3} generally require multiple iterations,
and cannot be solved using a clever covariate.

\section{Example 1: The mean of a variable  missing at
  random}\label{ex1}
  \subsection{Problem Definition}\label{sec_prob_defn_1}
Assume we observe $n$ independent, identically distributed draws $O_1,\ldots,O_n$, each
having the observed data structure $O=(X,M,MY)\sim\Pt$, where
$X$ is a vector of baseline random variables,
$M$ is an indicator of the outcome being observed, and $Y$ is the binary outcome.
For participants with $M=0$, we do not observe their outcome $Y$ (since we only observe $MY$, which equals zero for such participants); however, these participants do contribute baseline variables.
The only assumptions we make on the joint distribution
of $(X,M,Y)$ are that $Y$ is missing at random conditioned on $X$,
i.e., $M\indep Y|X$, and that $P(M=1|X)>0$ with probability 1.

Define the outcome regression $\mu(X) \equiv E_P(Y|M=1,X)$, the propensity score $p_M(X)\equiv P(M=1|X)$, and
the marginal density $p_X(X)$ of the baseline variables $X$. All of these components of the density $p$ are assumed unknown.
The parameter of interest is $E_{P}(Y)$, which by the missing at random assumption equals
$E_{p_X}(\mu(X))$. This parameter  only depends on the components $p_X$ and $\mu$ of the joint distribution $P$.
We  denote the parameter of interest as
$\Psi(\mu, p_X)=E_{P}(Y)\equiv E_{p_X}(\mu(X))$, where $E_{p_X}$ denotes the expectation with respect to the marginal distribution of $X$.

Note that in general
$E_{p_X}(\mu(X))  \neq  E_{P}(Y|M=1)$ (since the latter equals $E_{p_{X|M=1}}(\mu(X))$, i.e., the expectation of $\mu(X)$ with respect to the distribution of $X$ given $M=1$)
except in the special case
called missing completely at random, where $M$ and $X$ are marginally independent.
Denote the true mean of $Y$ by $\psi_0$. Identification and
estimation of $\psi_0$ is a widely studied problem
\cite[e.g.,][]{Robins97R,Kang&Shafer07}. The estimation problem
becomes particularly challenging when the dimension of $X$ is large,
since nonparametric estimators using empirical means suffer from the
curse of dimensionality. It is a challenging problem even when $X$ consists of a few, continuous-valued, baseline variables, as shown by \cite{Robins97R}.

Below, we contrast four TMLE implementations
for the above estimation problem. The purpose is to compare multiple options for the parametric submodel, in this simple problem. Also, we
demonstrate the general approach of using the exponential family
(\ref{expfam1}) as parametric submodel, in this relatively
well-studied problem, before applying it to more challenging problems
in Sections~\ref{ex2} and \ref{ex3}.
In Section~\ref{sec_onestep} we present an implementation of TMLE  from
\citet{vanderLaan&Rubin06}, which requires only a single iteration. A
variation of this estimator that uses weighted logistic
regression is presented in  Section~\ref{second_onestep}.
In Section \ref{ex1_exp_family_implementation}, we describe a TMLE implementation using the
exponential family (\ref{expfam1}) as a submodel; we also present a fourth implementation using the submodel (\ref{ff2}).

Under the conditions described in Theorem~\ref{Theo} of Appendix~\ref{theorem}, the
asymptotic distribution of the TMLE estimator
is not sensitive to the choice of submodel when each of the initial estimators $p_M^0$
and $\mu^0$ converges to its true value at a rate faster than
$n^{1/4}$. However, the choice of submodel may impact finite sample performance.
Also, when one of the estimators $p_M^0$ and $\mu^0$ does not converge to
the true value, the submodel choice may even affect performance asymptotically.
To shed light on this, we perform a simulation study in Section~\ref{simula}.

In general, TMLE implementations require that one has derived the
efficient influence function of the parameter of interest with respect
to the assumed model (which is the nonparametric model throughout this
paper). For the parameter in this section, the efficient influence
function is given by \cite{Bang05}
\begin{equation}
  D(p,O)=\frac{M}{p_M(X)}\{Y-\mu(X)\} + \mu(X) - \Psi(\mu, p_X).\label{eicEY}
\end{equation}
We will also
denote $D(p, O)$ by $D(\mu, p_M, p_{X},O)$, using the fact that the density $p$
can be decomposed into the components $\mu, p_M, p_{X}$ defined
above.

\subsection{First TMLE implementation for the mean of an outcome missing at random} \label{sec_onestep}

The first TMLE implementation has been extensively discussed in the literature
\cite[e.g.,][]{vanderLaan&Rubin06, vanderLaanRose11}, and we only provide
a brief recap. Following the template from Section~\ref{tmle},
we first define initial estimators $\mu^0$ and $p_M^0$ of
$\mu$ and $p_M$, respectively. These could obtained, e.g., by fitting logistic
regression models or by machine learning methods. We set the initial
estimator $p_{X}^0$  of $p_{X}$ to be the empirical distribution of
the baseline variables $X$, i.e., the distribution placing mass $1/n$
on each observation of $X$. The TMLE presented next is equivalent to
the estimator presented in page 1141 of \cite{Scharfstein1999R} when
$\mu^0$ is a logistic regression model fit. A  detailed discussion of the
similarities between the TMLE template and the estimators that stem
from \cite{Scharfstein1999R} is presented in Appendix 2 of
\cite{Rosenblum2010T}. We now show the construction of a parametric
submodel for this problem.
\paragraph{Construction of the parametric  submodel}
In this implementation of TMLE, the components $p_X$, $p_M$, and $\mu$
are each updated separately, such that they solve the corresponding part of the efficient influence function
estimating equation. Consider the $k$-th step of the TMLE algorithm
described in Section~\ref{tmle}. For the conditional expectation of $Y$ given
$X$ among individuals with $M=1$, and an estimator $\mu^k$, we define the logistic model
\begin{equation}
  \logit \mu_\epsilon^k(X) = \logit \mu^k(X) + \epsilon
  H_Y(X),\label{expFMAR}
\end{equation}
where $H_Y(X)=1/p_M^0(X)$. For the marginal distribution of $X$ we define the exponential model
\[p_{X,\theta}^k(X) \propto \exp\{\log p_{X}^k(X) + \theta H_X^k(X)\},\]
where $H_X^k(X)= \mu^k(X) - E_{p_X^k}\mu^k(X)$. The
variable $H_Y(X)$ has often been referred to as the ``clever
covariate''. The initial estimator of $p_M$ is not modified.

It is straightforward  to show that the efficient influence
function $D(p^k,O)$ is a linear combination of the scores of this joint
parametric model for the distribution of $O$.

We now describe the TMLE implementation based on the submodel construction above.
For this case, this procedure involves
only one iteration. In the first iteration we have
\[\hat \theta=\arg\max_{\theta}\frac{1}{n} \sum_{i=1}^n \log p_{X,\theta}^0(X_i)=0.\] This is because
the MLE of $p_{X}$ in the nonparametric model is precisely
the empirical $p_X^0$. An estimate  $\hat\epsilon$ of the parameter in the model
\[
  \logit \mu_\epsilon^0(X) = \logit \mu^0(X) + \epsilon
  H_Y(X),
\]
may obtained by running a logistic regression of $Y$ among individuals with $M=1$ on $H_Y(X)$ without intercept and including an
offset variable $\logit \mu^0(X)$. We now compute the updated estimate
of $\mu_0$ as \[\mu^1(X)=\expit \{\logit \mu^0(X) +
\hat\epsilon/p_M^0(X)\}.\] The  score equation corresponding to this logistic regression model is
\begin{equation}
\sum_{i=1}^n \frac{M_i}{p_M^0(X_i)}(Y_i - \mu^1(X_i))=0.\label{esteqTMLE1}
\end{equation}
Note that this matches the first component of the efficient influence function (\ref{eicEY}).
Proceeding to the second iteration, we
estimate the parameter $\epsilon$ in the model
\[\mu_\epsilon^1(X)=\expit \{\logit \mu^1(X) +
\epsilon /p_M^0(X)\},\]
by running a logistic regression of $Y$ on $H_Y(X)$ with offset $\mu^1(X)$
and without intercept among participants with $M=1$. This is equivalent to solving the score
equation
\[\sum_{i=1}^n \frac{M_i}{p_M^0(X_i)}(Y_i - \expit \{\logit \mu^1(X_i) +
\epsilon /p_M^0(X_i)\})=0\]
in $\epsilon$. By convexity and (\ref{esteqTMLE1}), the solution $\epsilon$ to the score equation in the above display
is equal to zero, and so the algorithm is terminates in the second iteration. The TMLE $\hat\psi$ is thus defined as
$\hat\psi=\frac{1}{n}\sum_{i=1}^n\mu^1(X_i)$. Confidence intervals may be constructed using
the non-parametric bootstrap. For a more detailed discussion of the
asymptotic properties of this estimator, as well as simulations, see \cite{vanderLaan&Rubin06}.

\subsection{Second TMLE implementation for the mean of an outcome missing at random} \label{second_onestep}
Consider the following $k$-th iteration parametric submodel for the
expectation of $Y$ conditional on $X$ among participants with $M=1$
\begin{equation}
  \logit \mu_\epsilon^k(X) = \logit \mu^k(X) + \epsilon\label{expFMARweights}.
\end{equation}
Let $\hat \epsilon$ be the first-step estimator of the intercept term
in a weighted logistic regression with weights $1/p_M^0(X)$ and offset
variable $\logit\mu^k(X)$. Let the updated estimator of $\mu_0$ be defined
as
\[\mu^1(X)=\expit \{\logit \mu^0(X) + \hat\epsilon\}.\]
By a similar argument as in the previous section, the estimate of
$\epsilon$ in the following iteration is equal to zero, and the TMLE
of $\psi$, defined as $\hat\psi=\frac{1}{n}\sum_{i=1}^n\mu^1(X_i)$,
converges in one step. Note that this implementation of the TMLE also
satisfies the score equation (\ref{esteqTMLE1}).

In addition, note that $\hat\epsilon$ in this section does not correspond to the
MLE of a parametric submodel. As a consequence, $\hat\psi$ is
not a targeted maximum likelihood estimator as defined in
Section~\ref{tmle}. Instead, it is part of a broader class of
estimators referred to as targeted minimum loss-based estimators, also
abbreviated as TMLE \cite{vanderLaanRose11}. These estimators generalize the TMLE framework of
Section~\ref{tmle} by allowing the use of general loss functions in
estimation of the parameter $\epsilon$ in the parametric submodel. In
the example of this section the loss function used is the weighted
least squares loss function.

This TMLE implementation is analogous to the estimator of Marshall
Joffe discussed in \cite{Robins2007} when $\mu^0$ is a parametric
model. The Joffe estimator is presented in \cite{Robins2007} as a
doubly robust alternative to the augmented IPW estimators when the
weights $1/p_M^0(X)$ are highly variable, i.e., when there are
empirical violations to the positivity
assumption; we simulate such scenarios in Section~\ref{simula}.
To the best of our knowledge, the above TMLE implementation was first
discussed by \citet{Stitelman2012} in the context of longitudinal
studies.

\subsection{Third and fourth TMLE implementations for the mean of an outcome missing at random} \label{ex1_exp_family_implementation}
We next give implementations of TMLE based on the following two types of submodel for $p$:
\begin{align}
  p_\epsilon^k(O)&=  c(\epsilon, p^k)\exp\{\epsilon D(p^k,O)\}p^k(O);\label{ff1}\\
  p_\epsilon^k(O)&=  c'(\epsilon, p^k)\{1 + \exp[-2\epsilon D(p^k,O)]\}^{-1}p^k(O).\label{ff2}
\end{align}
Here $c(\epsilon, p^k), c'(\epsilon, p^k)$ are the corresponding
normalizing constants, and $D(p^k,O)$ is the efficient influence
function given in (\ref{eicEY}). Model (\ref{ff1}) is the general
exponential family introduced in (\ref{expfam1}), while (\ref{ff2}) is
an alternative submodel.



The third and fourth TMLE implementations for estimating  $E(Y)$ are defined by the following iterative procedure:

\begin{enumerate}
\item Construct initial estimators $p_M^0$, $p_X^0$, and $\mu^0$ for $p_M$, $p_X$, and $\mu$, respectively. We use the same initial estimators as  in Section~\ref{sec_onestep}.
\item Construct
  a sequence of updated density estimates $p^k$, $k=1,2,\dots$, where at each iteration $k$ we construct $p^{k+1}$ as follows:
  estimate $\epsilon$ as
  \[\hat\epsilon=\arg\max_{\epsilon} \sum_{i=1}^n \log p_\epsilon^k(X_i,M_i,Y_i),\]
  where $p_\epsilon$ is given by (\ref{ff1}) or (\ref{ff2}), for the third or fourth implementation, respectively.
 Computation of $p_\epsilon^k(X_i,M_i,Y_i)$ requires evaluation of
  $D(p^k,O)$, which in turn requires $p_M^k$, $p_X^k$, and $\mu^k$.
  Define $p^{k+1}=p_{\hat\epsilon}^k$, and define  $p_M^{k+1}$,
  $p_X^{k+1}$, $\mu^{k+1}$ to be the corresponding components of $p^{k+1}$.
\item The previous step is iterated until convergence, i.e., until $\hat\epsilon\approx 0$. Denote the last step of the procedure by $k=k^*$.
\item The TMLE of $\psi_0$ is defined as the substitution estimator
  $\hat\psi\equiv \Psi(p^{k^*}) = E_{p_X^*}\{\mu^*(X)\}$, for
  $p_X^*$ and $\mu^*$ the corresponding components of $p^{k^*}$.
\end{enumerate}

If the initial estimator $p_X^0$ is the empirical distribution, then
$p_X^*$ is a density (with respect to counting measure) with positive mass only at the observed
values $X_i$. This is an important computational characteristic when computing the normalizing constant $c(\epsilon, p^*)$, since
integrals over $p_X^*$ become weighted sums over the sample. The
optimization in each iteration of step 2 is carried out using the BFGS
\cite{Broyden1970,Fletcher1970,Goldfarb1970,Shanno1970}
algorithm as implemented in the $R$ function {\texttt optim()}. The
optimization problem is convex in $\epsilon$, so that under regularity
conditions we expect the algorithm to converge to the global optimum.

\paragraph{Motivation for TMLE implementation with submodel
  (\ref{ff2}).} Submodel (\ref{ff1}) is not necessarily well define
for an unbounded efficient influence function $D(p^k,O)$. However, submodel
(\ref{ff2}) is always bounded and can be used with any
$D(p^k,O)$. An example of an unbounded efficient influence function is
given by (\ref{eicEY}) under empirical violations of the assumption
$P(p_M(X)>0)=1$. This problem has been extensively discussed,
particularly in the context of continuous outcomes
\cite[e.g.,][]{Bang05,Robins2007,Gruber2010t}. A TMLE with
submodel (\ref{ff2}) as presented in this section may provide an
alternative solution to those presented in the literature.


\subsection{Evaluating sensitivity of the TMLE to the choice of
  parametric submodel}\label{simula}
We perform a simulation study to explore the sensitivity of the TMLE to
the above four different choices of parametric submodels from
Sections~\ref{sec_onestep}-~\ref{ex1_exp_family_implementation}.
We generate data satisfying the missing at random assumption defined in Section~\ref{sec_prob_defn_1}.

\paragraph{Data generating mechanism for simulations}
The observed data on each  participant is the vector $(X,M,MY)$, where $X=(X_1,X_2)$.
The following defines the joint distribution of the variables $(X,Y)$:
\begin{align*}
  X_1 & \sim N(0,1/2),\\
  X_2|X_1 & \sim N(X_1,1),\\
  Y|X_1,X_2 & \sim Ber(\logit(X_2-X_2^2)),
\end{align*}
where $Ber(p)$ denotes the Bernoulli distribution with probability $p$ of $1$ and probability $1-p$ of $0$.
We consider the following three missing outcome distributions, which are referred to as missingness mechanisms, and are depicted in Figure~\ref{Missing}:
\begin{align}
  M|X_1,X_2 & \sim Ber(\logit(1+2X_2)),\label{D1}\\
  M|X_1,X_2 & \sim Ber(\logit(-1+2X_2)),\label{D2}\\
  M|X_1,X_2 & \sim Ber(\logit(-6+2X_2+2X_2^2)).\label{D3}
\end{align}

\begin{figure}[!htb]
  \caption{Different missingness mechanisms $p_M(X)$ considered.}
  \centering
  \includegraphics[scale = 0.4]{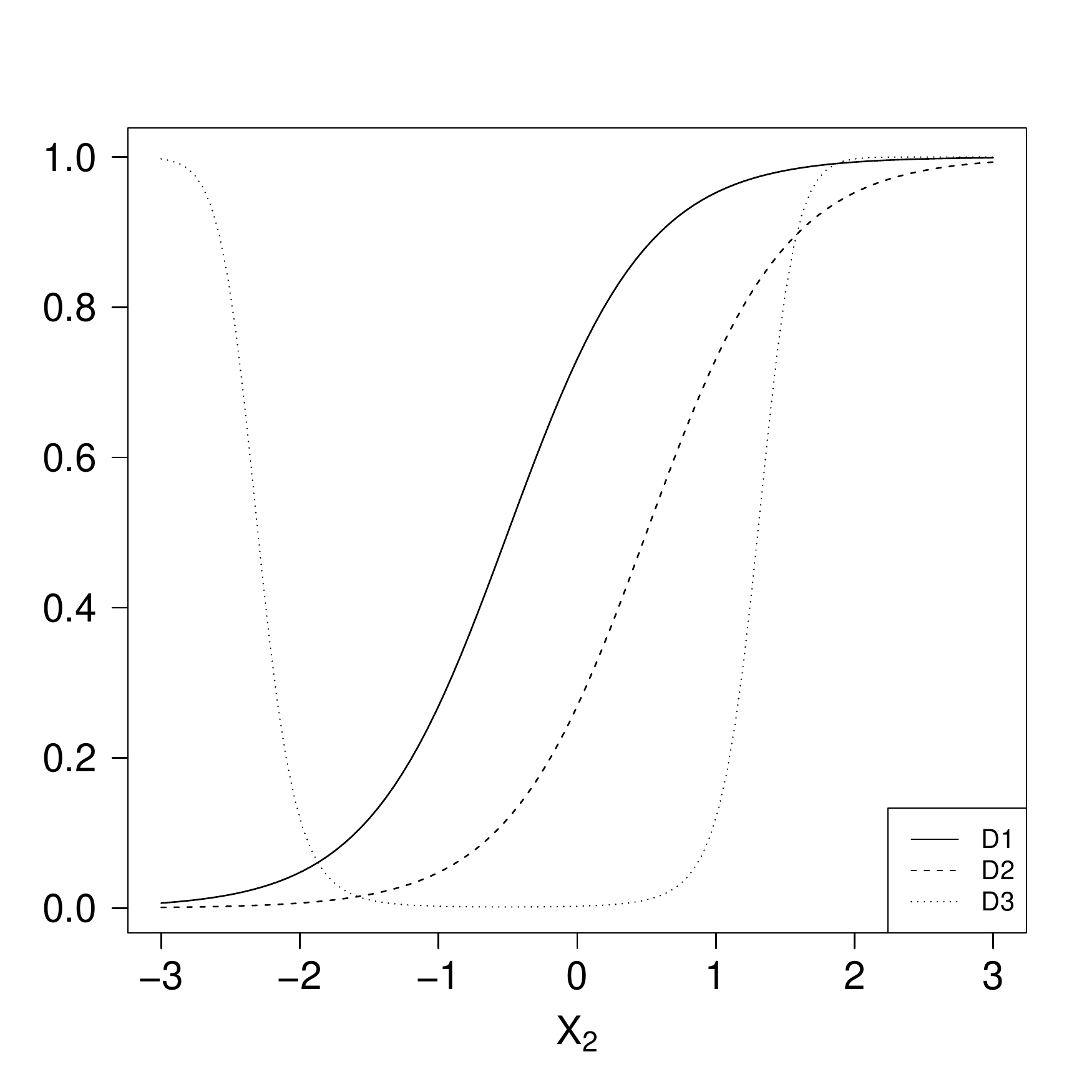}\label{Missing}
\end{figure}
We refer to these missingness mechanisms as D1, D2, and D3 respectively.
A practical violation of the positivity assumption is said to occur if
for some values of $(X_1,X_2)$, we have $P(M=1|X_1,X_2) \approx 0$.
Practical positivity violations are moderate under D2 and severe under D3.
We consider these three missingness mechanisms to assess the
performance of each TMLE implementation under different practical
violations to the positivity assumption, a scenario of high interest
since many doubly robust estimators can perform
poorly \cite{Robins2007}. A large fraction of small
probabilities as in D3 may be unlikely in a
missing data application. However, it is very common in survey sample
estimation, a field in which inverse probability weighted
estimators are the rule. For reference, the minimum missingness
probability in D3 is $0.0015$, and the median is $0.0047$. This is
consistent with survey weights found in the literature \cite[e.g.,][]{Rudolph14}.

Various studies have investigated the
performance of different estimators under violations to the positivity
assumption \cite[e.g.,][]{Kang&Shafer07, Porter2011}. We focus
on TMLEs, assessing the impact of the choice of submodel. Each
missingness mechanism, combined with the joint distribution of $(X,Y)$
defined above, determines the joint distribution of the observed data
$(X,M,MY)$.
We simulated 10000 samples of sizes 200, 500, 1000, and 10000,
respectively. This was done for each missingness mechanism.

We implemented the four types of TMLE  described
in this paper, using four different sets of working models for $\mu$ and
$p_M$: (i) correctly specified models for both, (ii)
correct model for $\mu$ and incorrect model for $p_M$, (iii) incorrect
model for $\mu$ and correct model for $p_M$, (iv) incorrect models
for both $\mu$ and $p_M$.
Misspecification of the working model for  $\mu$ consisted of using a logistic regression of $Y$ on $(X_1,X_1^2)$ among individuals with
$M=1$; misspecification of the working model for $p_M$ consisted of running
logistic regressions of $M$ on $(X_1, X_1^2)$. The
TMLE iteration was stopped whenever $\hat\epsilon < 10^{-4}$.

\paragraph{Simulation results}

Table \ref{msetable} shows the relative efficiency (using as reference
the analytically computed efficiency bound) of the four
estimators for different sample sizes under each working model
specification. The efficiency bounds for distributions D1, D2, and D3
are 0.34, 1.05, and 55.23, respectively.

The estimators with model specification (i) would be expected to have
asymptotic relative  efficiency equal to 1, which they approximately do at sample size
10000. The MSE of all estimators under severe positivity violations
(missingness mechanism D3) and  model specifications (i), (ii), and (iii) is
smaller than the efficiency bound for sample sizes 200 and 500. This fact does not contradict
theory; it is expected since the TMLE is a substitution
estimator and therefore has variance bounded between zero and one,
even when the efficiency bound divided by the sample size falls outside of this interval. A
similar observation is true for model specification (ii) for all sample
sizes. In this case, misspecification of the missingness model causes
a substantial reduction in variability of the inverse probability
weights, which results in smaller finite sample
variance. However, we note that the relative efficiency gets closer to
its theoretical value of one as the sample size increases. In the
extreme case of D3 a sample size of 10000 was not large enough to observe
the properties predicted by asymptotic theory.

It was not expected that under D1, the TMLE estimators are approximately semiparametric efficient
for models (ii) and (iii). This may be a  particularity of this
data generating mechanism, perhaps due to the low dimension of
the problem and the smoothness of this data generating
mechanism. As expected, the bias of the estimators
times square root of $n$ does not converge for model specification
(iv), since asymptotic theory dictates that in general at least one
of the working models must be correctly specified to imply
consistency of the TMLE.

Table~\ref{pcbiastable} shows the percent bias associated with each
estimator. The second TMLE implementation 2 performs better than its competitors in
finite samples under severe violations to the positivity assumption
(D3), particularly when the missingness model is
correctly specified (model specifications (i) and (iii)). However, under moderate positivity violation
(D2) and misspecification of the outcome  mechanism (model specification (iii)) in large samples ($n=10000$), TMLE
implementation 2 performed worse than all of its competitors (MSE 1.63 vs
1.06). Implementations 3 and 4 did not perform particularly better
than the best of implementations 1 and 2 for any scenario under consideration.

\begin{table}[!htb]
  \caption{Relative performance of the estimators ($n$ times MSE divided by the
    efficiency bound). $\hat\psi_1$, $\hat\psi_2$, $\hat\psi_3$ and
    $\hat\psi_4$ correspond to TMLE implementations 1, 2, 3, and 4,
    respectively. Data generating mechanisms D1, D2, and D3 correspond to
    expressions (\ref{D1}), (\ref{D2}), and (\ref{D3}), respectively. Model specification
    (i) is correctly specified for $\mu$ and $p_M$, (ii) is correct
    for $\mu$ and incorrect for $p_M$. (iii) is incorrect for $\mu$ and
    correct for $p_M$, and (iv) is incorrect for both.}
  \centering\scriptsize

  \begin{tabular}{rc|cccc|cccc|cccc}\hline
    \multirow{3}{*}{$n$} &               & \multicolumn{12}{c}{Missingness Mechanism}                                \\
    & Working              & \multicolumn{4}{c}{D1} & \multicolumn{4}{c}{D2} & \multicolumn{4}{c}{D3} \\
    &  model &
    $\hat\psi_1$ & $\hat\psi_2$ & $\hat\psi_3$
    & $\hat\psi_4$ & $\hat\psi_1$ &
    $\hat\psi_2$ & $\hat\psi_3$ & $\hat\psi_4$ &$\hat\psi_1$ & $\hat\psi_2$ & $\hat\psi_3$ & $\hat\psi_4$                                \\\hline
    \multirow{4}{*}{200}   & (i)           & 1.09  & 1.03  & 1.07  & 1.05  & 1.49  & 1.21  & 1.36  & 1.38  & 0.48  & 0.34  & 0.39  & 0.42  \\
    & (ii)          & 1.11  & 1.14  & 1.11  & 1.11  & 1.28  & 1.29  & 1.27  & 1.25  & 0.25  & 0.25  & 0.26  & 0.27  \\
    & (iii)         & 1.07  & 1.17  & 1.10  & 1.08  & 1.82  & 1.56  & 1.73  & 1.86  & 0.58  & 0.35  & 0.44  & 0.47  \\
    & (iv)          & 2.98  & 2.98  & 2.97  & 2.97  & 2.85  & 2.84  & 2.24  & 2.23  & 1.17  & 1.17  & 1.18  & 1.18  \\ \hline
    \multirow{4}{*}{500}   & (i)           & 1.02  & 1.00  & 1.01  & 1.00  & 1.26  & 1.03  & 1.17  & 1.12  & 0.83  & 0.58  & 0.69  & 0.70  \\
    & (ii)          & 1.03  & 1.04  & 1.03  & 1.03  & 1.12  & 1.17  & 1.12  & 1.11  & 0.31  & 0.31  & 0.31  & 0.32  \\
    & (iii)         & 1.07  & 1.18  & 1.09  & 1.08  & 1.43  & 1.54  & 1.48  & 1.45  & 0.94  & 0.59  & 0.77  & 0.76  \\
    & (iv)          & 5.50  & 5.51  & 5.50  & 5.50  & 5.34  & 5.34  & 3.78  & 3.77  & 2.74  & 2.74  & 2.75  & 2.75  \\ \hline
    \multirow{4}{*}{1000}  & (i)           & 1.01  & 1.00  & 1.01  & 1.00  & 1.19  & 1.05  & 1.14  & 1.07  & 1.09  & 0.79  & 0.94  & 0.93  \\
    & (ii)          & 1.06  & 1.07  & 1.06  & 1.06  & 1.11  & 1.16  & 1.10  & 1.10  & 0.37  & 0.37  & 0.37  & 0.38  \\
    & (iii)         & 1.01  & 1.11  & 1.02  & 1.02  & 1.18  & 1.56  & 1.29  & 1.18  & 1.25  & 0.86  & 1.10  & 1.07  \\
    & (iv)          & 9.89  & 9.89  & 9.89  & 9.89  & 9.74  & 9.74  & 6.29  & 6.29  & 5.27  & 5.26  & 5.27  & 5.27  \\\hline
    \multirow{4}{*}{10000} & (i)           & 0.99  & 0.99  & 0.99  & 0.99  & 1.04  & 0.99  & 1.01  & 0.99  & 1.07  & 1.00  & 1.04  & 1.07  \\
    & (ii)          & 1.05  & 1.06  & 1.06  & 1.06  & 1.06  & 1.11  & 1.06  & 1.06  & 0.54  & 0.53  & 0.54  & 0.54  \\
    & (iii)         & 1.01  & 1.10  & 1.01  & 1.01  & 1.06  & 1.63  & 1.03  & 1.03  & 1.36  & 1.31  & 1.53  & 1.24  \\
    & (iv)          & 87.40 & 87.40 & 87.40 & 87.40 & 87.26 & 87.25 & 53.14 & 53.14 & 52.11 & 52.11 & 52.10 & 52.10 \\\hline
  \end{tabular}
  \label{msetable}
\end{table}

\begin{table}[!htb]
  \caption{Percent bias of each estimator. The true value of the
    parameter is 0.36. $\hat\psi_1$, $\hat\psi_2$, $\hat\psi_3$ and
    $\hat\psi_4$ correspond to TMLE implementations 1, 2, 3, and 4,
    respectively. Data generating mechanisms D1, D2, and D3 correspond to
    expressions (\ref{D1}), (\ref{D2}), and (\ref{D3}), respectively. Model specification
    (i) is correctly specified for $\mu$ and $p_M$, (ii) is correct
    for $\mu$ and incorrect for $p_M$. (iii) is incorrect for $\mu$ and
    correct for $p_M$, and (iv) is incorrect for both.}
  \centering\scriptsize

  \begin{tabular}{rc|cccc|cccc|cccc}\hline
    \multirow{3}{*}{$n$} &               & \multicolumn{12}{c}{Missingness Mechanism}                                \\
    & Working              & \multicolumn{4}{c}{D1} & \multicolumn{4}{c}{D2} & \multicolumn{4}{c}{D3} \\
    &  model &
    $\hat\psi_1$ & $\hat\psi_2$ & $\hat\psi_3$
    & $\hat\psi_4$ & $\hat\psi_1$ &
    $\hat\psi_2$ & $\hat\psi_3$ & $\hat\psi_4$ &$\hat\psi_1$ & $\hat\psi_2$ & $\hat\psi_3$ & $\hat\psi_4$                                \\\hline
    \multirow{4}{*}{200}   & (i)   & 0.40  & 0.00  & 0.20  & 0.10  & 3.40  & 0.80  & 1.90  & 1.70  & 25.00 & 5.50  & 3.00  & 9.50   \\
    & (ii)  & 0.20  & 0.20  & 0.10  & 0.10  & 1.50  & 0.90  & 1.30  & 0.80  & 18.20 & 17.90 & 17.90 & 17.30  \\
    & (iii) & 0.10  & 1.30  & 0.60  & 0.20  & 5.30  & 6.80  & 6.70  & 6.00  & 39.30 & 8.00  & 11.30 & 22.50  \\
    & (iv)  & 15.20 & 15.20 & 15.20 & 15.20 & 21.50 & 21.40 & 21.50 & 21.50 & 33.10 & 33.90 & 33.20 & 33.10  \\ \hline
    \multirow{4}{*}{500}   & (i)   & 0.10  & 0.00  & 0.10  & 0.00  & 1.20  & 0.00  & 0.60  & 0.20  & 20.10 & 2.90  & 8.40  & 10.80  \\
    & (ii)  & 0.10  & 0.10  & 0.10  & 0.10  & 0.80  & 0.60  & 0.70  & 0.60  & 17.50 & 17.50 & 17.40 & 17.30  \\
    & (iii) & 0.10  & 0.60  & 0.40  & 0.10  & 1.40  & 3.20  & 2.60  & 1.70  & 31.00 & 2.70  & 20.90 & 27.40  \\
    & (iv)  & 15.10 & 15.10 & 15.10 & 15.10 & 21.10 & 21.10 & 21.10 & 21.10 & 33.30 & 33.80 & 33.30 & 33.30  \\ \hline
    \multirow{4}{*}{1000}  & (i)   & 0.00  & 0.10  & 0.10  & 0.10  & 0.70  & 0.00  & 0.30  & 0.10  & 8.80  & 3.40  & 4.00  & 5.60   \\
    & (ii)  & 0.00  & 0.00  & 0.00  & 0.00  & 0.30  & 0.30  & 0.30  & 0.20  & 14.40 & 14.50 & 14.30 & 14.30  \\
    & (iii) & 0.00  & 0.30  & 0.30  & 0.10  & 0.30  & 1.70  & 1.30  & 0.50  & 21.50 & 0.40  & 17.70 & 21.70  \\
    & (iv)  & 15.30 & 15.30 & 15.30 & 15.30 & 20.80 & 20.80 & 20.80 & 20.80 & 32.80 & 33.30 & 32.90 & 32.90  \\\hline
    \multirow{4}{*}{10000} & (i)   & 0.00  & 0.00  & 0.00  & 0.00  & 0.10  & 0.00  & 0.00  & 0.00  & 0.10  & 0.30  & 0.20  & 0.10   \\
    & (ii)  & 0.00  & 0.00  & 0.00  & 0.00  & 0.10  & 0.10  & 0.00  & 0.00  & 3.40  & 3.40  & 3.10  & 3.10   \\
    & (iii) & 0.00  & 0.00  & 0.20  & 0.00  & 0.00  & 0.20  & 0.40  & 0.10  & 2.40  & 0.20  & 4.70  & 0.70   \\
    & (iv)  & 15.20 & 15.20 & 15.20 & 15.20 & 20.80 & 20.80 & 20.80 & 20.80 & 32.30 & 32.80 & 32.40 & 32.30  \\\hline
  \end{tabular}
  \label{pcbiastable}
\end{table}
\newpage
Another important question to ask when deciding on a parametric
submodel is the computational efficiency of the estimators.
Our simulations are in accordance to what we have observed in practice for this and other parameters, in  that TMLE typically requires 6 or fewer iterations.
In the above simulations, the  time required to compute the TMLE for a single data set
was typically less than a second.

\section{Example 2: Median regression}\label{ex2}
Consider the median regression model:
\begin{equation}
  Y = g(X,\beta) + \delta \label{med_regr_model1},
\end{equation}
where $g(X,\beta)$ is a known, smooth function in $\beta$,
and where the conditional median of $\delta$ given $X$ is $0$ a.s.
This last condition is equivalent to having with probability 1 that
\begin{equation}
  P(\delta \geq 0|X) \geq1/2 \mbox{ and }  P(\delta \leq 0|X) \geq 1/2.  \label{med_regr_model2}
\end{equation}
We let $\lambda(x,y)$ denote a dominating measure for the
distributions $P$ we consider, and denote by $p$ the density of
$P$. We say the above median regression model is correctly specified
if at the true data generating distribution $\Pt$, we have for some
$\beta_0$ that the conditional median under $\Pt$ of $Y - g(X,\beta_0)$ given $X$ is
$0$, with probability 1. Throughout, we do not assume the median regression model is correctly specified.

Define the following nonparametric extension of $\beta$ (which maps each density $p$ to a value
$\beta^\ast(p)$ in $\mathbb{R}^d$):
\begin{equation}
  \beta^\ast(p) \equiv \arg \min_\beta E_p |Y -
  g(X,\beta)|. \label{parameter_defn}
\end{equation}
We assume there is a unique minimizer in $\beta$
of $E_p |Y - g(X,\beta)|$. Under this assumption, if the median regression model
(\ref{med_regr_model1},\ref{med_regr_model2}) is correctly specified,
then this unique minimizer equals $\beta^\ast(p)$. However, even
when (\ref{med_regr_model1},\ref{med_regr_model2}) is misspecified,
the parameter in (\ref{parameter_defn}) is well defined as long as
there is a unique minimizer of $E_p |Y - g(X,\beta)|$. To simplify the notation, we denote $\beta^\ast(p)$ by $\beta(p)$ and
$\beta(\pt)$ by $\beta_0$. The goal is to estimate $\beta_0$ based on $n$ i.i.d. draws
$O_i = (X_i,Y_i)$ from an unknown data generating distribution
$\Pt$.

The nonparametric estimator of $\beta_0$ is the minimizer in $\beta$ of $\frac{1}{n}\sum_{i=1}^n
|Y_i - g(X_i,\beta)|$. \citet{Koenker1996} proposed a solution to this
optimization problem based on linear programming. Their methods are
implemented in the \texttt{quantreg} \cite{quantreg} R package. We develop a TMLE for $\beta_0$, in order to demonstrate it is a general methodology that can be applied to a variety of estimation problems, and to compare its performance versus the estimator of \citet{Koenker1996} that is explicitly tailored to the problem in this section.

The efficient influence function for the parameter
(\ref{parameter_defn}) in the nonparametric model, at distribution $P$, is
(up to a normalizing constant which we suppress in what follows):
\begin{eqnarray}
  D(p,X,Y) \equiv -\left.\frac{d}{d\beta}g(X,\beta)\right|_{\beta =
    \beta(p)}\mbox{sign}\{Y-g(X,\beta(p))\}. \label{eif}
\end{eqnarray}
In particular, we have
\begin{equation}
  E_pD(p,X,Y) = 0, \label{score_property}
\end{equation}
for all sufficiently smooth $p$. 

\paragraph{Construction of parametric  submodel}
Given the estimate $p^k$ of $p_0$ at iteration $k$ of the TMLE algorithm, we construct
a regular, parametric model $\{p_\epsilon^k:\epsilon\}$
satisfying: (i) $p_0^k=p^k$ and (ii) $\frac{d}{d\epsilon}\log p_\epsilon^k(x,y)|_{\epsilon=0}=D(p^k,x,y)$ for each $x \in \mathcal{X}, y\in\mathcal{Y}$. We again use an exponential submodel as in (\ref{expfam1}), which in this case is
\begin{equation}
  p_\epsilon^k(x,y) = p^k(x,y)\exp(\epsilon D(p^k,x,y))c(\epsilon,p^k), \label{submodel}
\end{equation}
where the normalization constant
$c(\epsilon,p) = \left[\int p(x,y)\exp(\epsilon D(p,x,y))d\lambda(x,y)\right]^{-1}$.
This parametric model is well-defined, regular, equals $p^k$ at $\epsilon=0$, and has score:
\begin{equation}
  \frac{d}{d\epsilon}\left[\log p_\epsilon^k (x,y)\right] = D(p^k,x,y) -
  c(\epsilon, p^k)\int D(p^k,x,y)\exp\{\epsilon D(p^k,x,y)\}p^k(x,y)d\lambda(x,y), \label{score1}
\end{equation}
under smoothness and integrability conditions that allow the interchange of the order of differentiation and
integration. It follows from
 (\ref{score_property}) and (\ref{score1}) that the
score equals $D(p^k,x,y)$ at $\epsilon=0$, and therefore satisfies the conditions
(i) and (ii) described above.

\paragraph{Implementation of Targeted Maximum Likelihood Estimator}
We present an implementation of the TMLE applying the parametric submodel (\ref{submodel}). First, we
construct an initial density estimator $p^0(x,y)$. We let $p^0(x)$ be
the empirical distribution of $X$. We
fit a linear regression model for $Y$
given $X$ with main terms only, and let $p^0(y|X=x)$ be a normal
distribution with conditional mean as given in the linear regression
fit, and with conditional variance 1. We then define $p^0(x,y) \equiv
p^0(y|x)p^0(x).$ A more flexible method can be
used to construct the initial fit for the density of $Y$ given
$X$. Here we use this simple model to examine how well the
TMLE can recover from a poor choice for the initial density estimate.

Initializing $k=0$, the iterative procedure defining the TMLE involves the following
computations, at each iteration $k$ (where $a \leftarrow b$ represents setting $a$ to take value $b$):
\begin{eqnarray}
  \beta^k & \leftarrow &\arg \min_\beta E_{p^k} |Y - g(X,\beta)|; \label{alg_step1} \\
  D(p^k,X,Y) & \leftarrow & -\left.\frac{d}{d\beta}g(X, \beta)\right|_{\beta = \beta(p^k)}\mbox{sign}(Y-g(X,\beta(p^k))); \label{alg_step2} \\
  p_\epsilon^k(x,y) & \leftarrow & p^k(x,y)\exp(\epsilon D(p^k,x,y))c(\epsilon,p^k); \label{submodel_again} \\
  \hat\epsilon & \leftarrow &\arg \max_\epsilon \sum_{i=1}^n \log
  p^k(X_i,Y_i)\exp\{\epsilon
  D(p^k,X_i,Y_i)\}c(\epsilon,p^k); \label{alg_step3} \\
  p^{k+1} & \leftarrow & p^k_{\hat\epsilon}.
\end{eqnarray}
The value of $\beta^k$ in (\ref{alg_step1}) is approximated by
grid search over $\beta$, where for each value of $\beta$ considered,
the expectation on the right hand side of (\ref{alg_step1}) is
approximated by Monte Carlo integration (based on generating 10000
independent realizations from the density $p^k$ and taking the empirical
mean over these realizations). The value of $\hat\epsilon$ in
(\ref{alg_step3}) is approximated by applying the Newton-Raphson
algorithm to the summation on the right side of
(\ref{alg_step3}), where we use the analytically derived gradient and Hessian
in the Newton-Raphson algorithm. The above process is
iterated over $k$ until convergence (we used $\hat\epsilon < 10^{-4}$
as stopping rule).

The density at the final iteration is denoted by
$p^\ast$, and the TMLE of $\beta_0$ is defined as $\beta_n\equiv
\beta(p^\ast)$. If the initial estimator of
$\pt(Y|X)$ is consistent, it is possible to use standard arguments for the analysis of
targeted maximum likelihood estimators \cite{vanderLaanRose11} to show
that this estimator is asymptotically linear with influence function
equal to the efficient influence function $D(\pt,X,Y)$.

\paragraph{Simulation} We draw 10000 samples, each of
size 1000, of a two dimensional covariate $X=(X_1,X_2)$ by drawing $X_1\sim
U(0,1)$ and $X_2\sim U(0,1)$, with $X_1,X_2$ independent. We consider
two different outcome distributions for $Y$ given $X$; the outcomes
under each distribution are denoted by $Y_1,Y_2$, respectively.
The first outcome involves drawing $\delta_1\sim Exp(3)$, and setting
\begin{equation}
  Y_1=-\frac{\ln(2)}{3} +
  \expit(1.5X_1 + 2.5X_2) + \delta_1,\label{DR1}
\end{equation} where $\expit$ is the inverse of the
$\logit$ function $\logit(x)=\log(x/(1-x))$. This represents a case in which the error
distribution is skewed. The constant $\ln(2)/3$ was selected since it is the median of $Exp(3)$, which implies the median $Y_1$ given $X$  equals
 $\expit(1.5X_1 + 2.5X_2)$.
The second outcome distribution involves drawing
$\delta_2\sim N(0,1)$ and setting
\begin{equation}
  Y_2=\exp(X_1 + 2X_2) + \delta_2.\label{DR2}
\end{equation}
Note that we used $\exp$ in the above display instead of $\expit$.
We denote the first outcome distribution
    (\ref{DR1}) by $D1$, and the second (\ref{DR2}) by $D2$.

We are interested in estimating the parameters
\begin{eqnarray}
  \beta_0^{(1)} &=& \arg\min_\beta E|Y_1 - g(X,\beta)| \label{Y_1_param} \\
  \beta_0^{(2)} &=&  \arg\min_\beta E|Y_2 - g(X,\beta)|, \label{Y_2_param}
\end{eqnarray}
where $g(X,\beta)=\expit(\beta' X)$. The true value of $\beta_0^{(1)}$
is $(1.5, 2.5)$, whereas the true value of $\beta_0^{(2)}$ is
approximately $(2.1, 9.2)$ (obtained using Monte Carlo simulation).
The median regression model $g(X,\beta)=\expit(\beta' X)$
 is a correctly specified model under the first outcome distribution, but is
incorrectly specified for the second outcome distribution.
However, the minimizers $\beta_0^{(1)}$ and
$\beta_0^{(2)}$ of the right sides of (\ref{Y_1_param}) and (\ref{Y_2_param}), respectively, are both well defined.

For each of the 10000 samples we computed the estimator described
above. The marginal distribution of $X$ was estimated by the empirical distribution in the given sample. The conditional distribution of $Y$ given
$X$ was misspecified by running a linear regression of $Y$ on
$(X_1,X_2)$ with main terms only,
and assuming that $Y$ is normally distributed with conditional
variance equal to one. This was done in order to assess how the TMLE
can recover from a poor fit of the initial densities resulting from a
distribution that is commonly used in statistical practice. We then computed
the MSE across the 10000 estimates as $E(||\hat \beta - \beta_0||^2)$,
where $||\cdot||$ represents the Euclidean norm. The results are
presented in Table \ref{tabMedReg}. For comparison, we
computed the same results for two other estimators: the quantile
regression function \texttt{nlrq()} implemented in the R package
\texttt{quantreg}, and a substitution estimator (SE) that is the
result of optimizing (\ref{alg_step1}) in the first iteration with
$p^k$ set to $p^0$.

\begin{table}[ht]
  \caption{Square root of mean squared error (MSE) for estimators of
    parameters (\ref{Y_1_param}) and (\ref{Y_2_param}). }\label{tabMedReg}
  \centering
  \begin{tabular}{rrr|rrr}
    \hline
     \multicolumn{3}{c|}{$\beta_0^{(1)}$}          & \multicolumn{3}{c}{$\beta_0^{(2)}$}\\\hline
     TMLE & \texttt{nlrq()} & SE    & TMLE & \texttt{nlrq()} & SE\\
    \hline
     0.37 & 0.38 & 3.99 & 7.15 & 8.50 & 6.76 \\
    \hline
  \end{tabular}
\end{table}

The TMLE and the estimator of \citet{Koenker1996} perform
similarly for $\beta_0^{(1)}$, i.e., when the median regression model is correct.
The TMLE and SE perform better for estimating $\beta_0^{(2)}$, i.e., when
the median regression model is incorrectly specified. This is not
surprising since the estimator of \citet{Koenker1996} is designed for
the case where the median regression model is correctly specified.

\section{Example 3: The causal effect of a continuous exposure} \label{ex3}

We explore the use of an exponential family as a
parametric submodel only for certain components of the likelihood. Consider
a continuous exposure $A$, a binary outcome $Y$, and a set of covariates
$W$. For a user-given value $\gamma$ we are interested in estimating
the expectation of $Y$ under an intervention that causes a shift of
$\gamma$ units in the distribution of $A$ conditional on $W$. Formally, consider an i.i.d. sample of $n$ draws of the random variable $O=(W,A,Y)\sim \Pt$. Denote
$\mu_0(A,W)\equiv E_{\pt}(Y|A,W)$, $p_{W,0}(W)$  the marginal
density of $W$ and $p_{A,0}(A|W)$  the conditional density of $A$
given $W$. We assume that these data were generated by a nonparametric
structural equation model \cite[NPSEM,][]{Pearl00}:
\[  W=f_W(U_W);\quad
A=f_A(W, U_A);\quad
Y=f_Y(A,W,U_Y),
\]
where $f_W$, $f_A$, and $f_Y$ are unknown but fixed functions, and
$U_W$, $U_A$, and $U_Y$ are exogenous random variables satisfying the
randomization assumption $U_A\indep U_Y|W$. We are interested in the
causal effect  on $Y$ of a shift of $\gamma$ units in $A$. Consider the following intervened NPSEM
\[  W=f_W(U_W);\quad
A_\gamma=f_A(W, U_A)+\gamma;\quad
Y_\gamma=f_Y(A_\gamma,W,U_Y).
\]
This intervened NPSEM represents the random variables that would have
been observed in a hypothetical world in which every participant
received $\gamma$ additional units of exposure $A$. \citet{Diaz12} proved that
\[E(Y_\gamma)=E_0\{\mu_0(A+\gamma, W)\}.\]
For each density $p$, define the parameter
\[\Psi(p)\equiv E_{p_W, p_A}\{\mu(A+\gamma, W)\},\]
where $\mu$, $p_A$, and $p_W$ are the outcome conditional
expectation, exposure mechanism, and
covariate marginal density corresponding to $p$,
respectively. We also use the notation $\Psi(\mu, p_A, p_W)$ to
refer to $\Psi(p)$. We are interested in estimating the true value of the parameter
$\psi_0\equiv \Psi(\pt)$.

The efficient influence function of $\Psi(p)$ at $p$ is given by \cite{Diaz12} as:
\[D(p,O)\equiv \frac{p_A(A-\gamma|W)}{p_A(A|W)}\{Y-\mu(A, W)\} +
\mu(A+\gamma, W) - \Psi(p).\]

\paragraph{Construction of the parametric submodel}
Consider initial estimators $\mu^0(A, W)$ and $p_A^0(A|W)$, which can be
obtained, for example, through machine learning methods. We estimate
the marginal density of $W$ with its empirical
counterpart denoted $p_W^0$, and construct a sequence of parametric submodels for
$p_0$ by specifying each component as:
\begin{align*}
  \logit \mu_\epsilon^k(A,W)&= \logit \mu^k (A, W) + \epsilon
  H_Y^k(A, W)\\
  p_{A,\epsilon}^k(A|W) & = c_1(\epsilon, W)p_A^k(A|W)\exp\{\epsilon H_A^k(A,W)\}\\
  p_{W, \theta}^k(W) & = c_2(\theta, W)p_W^k(W)\exp\{\theta H_W^k(W)\},
\end{align*}
where
\begin{align*}
  H_Y^k(A,W)&=\frac{p_A^k(A-\gamma|W)}{p_A^k(A|W)}\\
  H_A^k(A,W)&=\mu^k(A+\gamma, W) - E_{p^k}\{\mu^k(A+\gamma, W)|W\}\\
  H_W^k(W) & = E_{p^k}\{\mu^k(A+\gamma, W)|W\} - \Psi(p^k),
\end{align*}
and $c_1$, $c_2$ are the corresponding normalizing constants. The sum of the scores of these models at $\epsilon=0,\theta=0$ equals the
efficient influence function $D(p^k,O)$.

\paragraph{Implementation of Targeted Maximum Likelihood Estimator}
Following the TMLE template of Section \ref{tmle}, we have (where
the MLE of $\theta$ is $0$, due to the initial estimator of $p_W$ being the empirical distribution):
\begin{enumerate}
\item Compute initial estimators $\mu^0(A, W)$ and $p_A^0(A|W)$. For
  example, $\mu^0$ and $p_A^0$ may be obtained through a stacked
  predictor such as super learning \cite{vanderLaan&Polley&Hubbard07}. Super learners rely on a
  user-supplied library of prediction algorithms to build a convex
  combination with weights that minimize the cross-validated empirical
  risk. In particular, the authors in the original paper \cite{Diaz12}
  use a library containing various specifications of generalized linear models (GLMs), generalized
  additive models, and Bayesian GLMs for $\mu^0$. The conditional
  density $p_A^0$ is estimated using super learning in a library of
  histogram-like density estimators, as proposed in
  \cite{Diaz11}. Here we emphasize a particularly appealing feature of TMLE:
  it allows the integration of machine learning methods with
  semiparametric efficient estimation.
\item Construct
  a sequence of updated density estimates $p^k$, $k=1,2,\dots$, where at each iteration $k$ we
  estimate the maximizer of the relevant parts of
  the log likelihood: \[\hat\epsilon =
  \arg\max_{\epsilon}\sum_{i=1}^n\{Y_i\log\mu^k_\epsilon(A_i,W_i) +
  (1-Y_i)\log(1-\mu^k_\epsilon(A_i,W_i)) + \log p_{A,\epsilon}^k(A_i,W_i)\}.\]
  Define $p^{k+1}=p^{k}(\hat\epsilon)$
\item The previous step is iterated until convergence, i.e., until $\hat\epsilon\approx 0$. Denote the last step of the procedure by $k=k^*$.
\item The TMLE of $\psi_0$ is defined as the substitution estimator
  $\hat\psi\equiv \Psi(p^{k^*})$.
\end{enumerate}
Optimization of the likelihood in step 2 is a convex optimization
problem that may be solved, for example, based on the Newton-Raphson
algorithm. Another option, implemented in the original paper using the R function \texttt{uniroot()}, is to
solve the estimating equation $S^k(\epsilon)=0$, where
\begin{multline}
  S^k(\epsilon) = \sum_{i=1}^n\bigg\{[Y_i - \mu^k(A_i,W_i) -
  \epsilon_1H_Y^k(O_i)]H_Y^k(O_i) + H_A^k(O_i) -
  \\\frac{\int H_A^k(a, W_i)\
    \exp\{\epsilon H_A^k(a, W_i)\}\
    p_A^k(a|W_i)da}{\int \exp\{\epsilon
    H_A^k(a, W_i)\}\ p_A^k(a|W_i)da}\bigg\}.
\end{multline}
In practice, the iteration process is carried out until
convergence in the values of $\hat\epsilon$ to approximately $0$ is
achieved. We denote
$\mu^\ast$ and $g_n^\ast$ the last values of the iteration, and
define the TMLE of $\psi_0$  as
$\psi_{n}\equiv \Psi(\mu^\ast, p_A^\ast, p_W^0)$. The variance
of $\psi_{n}$ can be estimated by the empirical variance of
$D(\mu^*,p_A^*, p_W^0,O)$. This is a consistent estimator of the
variance if both $p_A^0$ and $\mu^0$ are consistent. Simulations
studying the properties of this estimator were performed in the
original paper. The results of those simulations confirm the double
robustness of the TMLE (robustness to misspecification of one of the
estimates $\mu^0$ or $p_A^0$), it asymptotic efficiency, and its
superiority when compared to the inverse probability weighted
estimator. For more extensive discussion of the properties of this estimator we refer the
reader to \cite{Diaz12}.

\section{Discussion}

We presented several implementations of TMLE using parametric families
with convex log-likelihood in three examples. Since reliable and efficient algorithms exist for
convex optimization, parametric submodels with
convex log-likelihood may lead to computationally advantageous implementations of TMLE.

An important choice in any TMLE implementation  is which
parametric submodel to use. We showed a simulation in which this choice
has a substantial impact on the performance of the targeted
maximum likelihood estimator, even at very large sample sizes. It is  important to
consider the characteristics of each submodel before choosing an
estimator.

An additional consideration for the choice
of parametric family is ease of implementation. For example, for estimation of the mean
of a binary outcome missing at random, the TMLE using a logistic parametric submodel
from Section \ref{sec_onestep} converges in one step and is generally
faster and easier to implement than the alternatives we considered.  However, under
severe violations of the positivity assumption, the
implementation of a TMLE presented in Section~\ref{second_onestep} may be more appealing than
the alternatives we considered, based on the results of our simulation.


Various estimators with desirable properties have been
proposed for some of the examples  in this paper. Notably,
estimation of the expectation of an outcome missing at random has been
widely studied (see, e.g., \citet{Kang&Shafer07} for a review). Also, \citet{Haneuse2013} propose an estimator
of $\psi_0$ in Example 3 that relies on the correct specification of a
parametric model for the outcome expectation and
the conditional density of the exposure. TMLE, on the other hand, is
a general estimation template that allows the construction of
estimators for a considerable class of statistical problems, allowing
integration with data-adaptive estimation methods.

\section*{Acknowledgements}
We would like to thank Mark van der Laan for helpful discussions
that greatly improved this manuscript.

\appendix

\section{Asymptotic equivalence of targeted maximum
  likelihood estimators using different submodels}\label{theorem}
\begin{theorem}\label{Theo} Let $O_1,\ldots,O_n$ be an i.i.d. sample
  from $O=(X,M,MY)\sim\Pt$. Assume $\Pt\in \mathcal{M}$, where
  $\mathcal{M}$ denotes the nonparametric model. Assume there exists
  a dominating measure $\nu$ for $\mathcal M$ so that the density
  $p$ of $P$ is well defined. Denote $\mu(X) \equiv E_{P}(Y|M=1,X)$, and
  $p_{M}(X)\equiv P(M=1|X)$. Define $\Psi(\mu,p_X)=E_{P_X}(\mu(X))$. Let
  $\mu^0$ and $p_M^0$ be estimators of $\mu_0$ and $p_{M,0}$. Let
  $p_X^0$ denote the empirical distribution of $X$.
  For fixed $p$, let $\mathcal M_1 =\{p_\epsilon:\epsilon\}$ and $\mathcal M_2
  =\{p_\theta:\theta\}$ be parametric submodels through $p$ (i.e.,
  $p_\epsilon|_{\epsilon=0}=p_\theta|_{\theta=0}=p$) satisfying
  \[\frac{d}{d\epsilon}\{\log
  p_\epsilon(O)\}\bigg|_{\epsilon=0}=\frac{d}{d\theta}\{\log
  p_\theta(O)\}\bigg|_{\theta=0}=D(p,O),\]
  where $D$ is defined in (\ref{eicEY}). Let $\hat \psi_1$ and $\hat \psi_2$ be the targeted maximum likelihood
  estimators of $\psi_0=\Psi(\mu_0,p_{X,0})$ using submodels $\mathcal
  M_1$ and $\mathcal M_2$, respectively.
  Assume:
  \begin{align*}
    &  \sqrt{n}\int (\mu^0(x) - \mu_0(x))^2dP_{X,0}(x)\to 0\\
    &  \sqrt{n}\int (p_M^0(x) - p_{M,0}(x))^2dP_{X,0}(x)\to 0
  \end{align*}
  as $n\to\infty$. In addition, assume that

\[
    \int (D(\mu^0, p_M^0, p_X^0, o) - D(\mu_0, p_{M,0}, p_{X,0}, o))^2dP(o)\to 0
\]
and that $D(\mu^0,p_M^0,p_X^0,O)$ belongs to a Donsker  class \cite{vanderVaart98}
with probability tending to one. Then we have
  \begin{align*}
    \sqrt{n}(\hat\psi_1-\psi_0)&\to N(0, \sigma^2)\\
    \sqrt{n}(\hat\psi_2-\psi_0)&\to N(0, \sigma^2),
  \end{align*}
  where $\sigma^2=Var(D(p_0,O))$.
\end{theorem}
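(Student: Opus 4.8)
The plan is to show that both $\hat\psi_1$ and $\hat\psi_2$ admit the \emph{same} first-order expansion,
\[
  \sqrt{n}(\hat\psi_j-\psi_0)=\frac{1}{\sqrt n}\sum_{i=1}^n D(p_0,O_i)+o_P(1),\qquad j=1,2,
\]
after which the conclusion follows from the central limit theorem applied to the mean-zero term $D(p_0,O_i)$, whose variance is $\sigma^2$. The essential point is that this leading term depends only on the truth $p_0$, not on the submodel, so the choice between $\mathcal M_1$ and $\mathcal M_2$ enters only through terms that are shown to be $o_P(n^{-1/2})$. Throughout I write $P_0 f=\int f\,dP_0$ and $P_n f=n^{-1}\sum_{i=1}^n f(O_i)$.

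First I would record the estimating-equation property. Both $\mathcal M_1$ and $\mathcal M_2$ satisfy the score condition (\ref{score}) by hypothesis, so the argument leading to (\ref{eifee}) applies verbatim to each submodel: at the converged density $p^\ast_j$ one has $P_n D(p^\ast_j,\cdot)=0$. Writing $\mu^\ast_j$ and $p^\ast_{M,j}$ for the outcome-regression and propensity components of $p^\ast_j$ and recalling that $p_X^\ast$ is the empirical distribution, I would then invoke the exact identity
\[
  \hat\psi_j-\psi_0=-P_0 D(p^\ast_j,\cdot)+R(p^\ast_j,p_0),
\]
where a short computation of $P_0 D(p^\ast_j,\cdot)$ using the missing-at-random structure $M\indep Y\mid X$ (specifically $E_0[M\mid X]=p_{M,0}(X)$ and $E_0[MY\mid X]=p_{M,0}(X)\mu_0(X)$) together with (\ref{eicEY}) collapses the remainder to the product form
\[
  R(p^\ast_j,p_0)=\int \frac{p_{M,0}(x)-p^\ast_{M,j}(x)}{p^\ast_{M,j}(x)}\bigl(\mu_0(x)-\mu^\ast_j(x)\bigr)\,dP_{X,0}(x).
\]

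Combining this identity with $P_n D(p^\ast_j,\cdot)=0$ gives the decomposition
\[
  \hat\psi_j-\psi_0=(P_n-P_0)D(p_0,\cdot)+(P_n-P_0)\bigl[D(p^\ast_j,\cdot)-D(p_0,\cdot)\bigr]+R(p^\ast_j,p_0).
\]
The first term equals $n^{-1}\sum_{i=1}^n D(p_0,O_i)$ because $P_0 D(p_0,\cdot)=0$ by (\ref{effic_property}). For the remainder, a positivity lower bound on $p^\ast_{M,j}$ and Cauchy--Schwarz give $|R|\lesssim \|\mu^\ast_j-\mu_0\|_{L_2(P_0)}\,\|p^\ast_{M,j}-p_{M,0}\|_{L_2(P_0)}$, so the two stated rate conditions yield $\sqrt n\,R=o_P(1)$. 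For the middle term, the Donsker and $L_2$-consistency hypotheses trigger the standard asymptotic-equicontinuity result for empirical processes indexed by a Donsker class, giving $(P_n-P_0)[D(p^\ast_j,\cdot)-D(p_0,\cdot)]=o_P(n^{-1/2})$; note that the constant terms $\Psi$ inside $D$ do not contribute, since $(P_n-P_0)$ annihilates constants.

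Assembling these three pieces gives the claimed common expansion, hence $\sqrt n(\hat\psi_j-\psi_0)\to N(0,\sigma^2)$ for both $j$ with identical $\sigma^2=\mathrm{Var}(D(p_0,O))$. I expect the main obstacle to be that the Donsker and consistency hypotheses are stated for the \emph{initial} estimator $D(\mu^0,p_M^0,p_X^0,\cdot)$, whereas the equicontinuity and remainder bounds above are required for the \emph{fluctuated, submodel-specific} final fit $D(p^\ast_j,\cdot)$; the argument must therefore show that the TMLE update, being a low-dimensional (in $\epsilon$) smooth perturbation, preserves both $L_2(P_0)$-convergence rates and membership in a Donsker class. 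Once this transfer is established the remaining steps are routine, and the submodel independence of the leading term delivers the asymptotic equivalence of $\hat\psi_1$ and $\hat\psi_2$.
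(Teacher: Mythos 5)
Your proposal is correct in outline, but it is a genuinely different route from what the paper offers: the paper's entire proof is the single sentence ``This result follows from Theorem 1 of \cite{vanderLaan&Rubin06},'' whereas you unpack that black box into the standard self-contained argument. Your key identity is right: using $M\indep Y\mid X$ one computes $P_0D(p,\cdot)=\int\frac{p_{M,0}}{p_M}(\mu_0-\mu)\,dP_{X,0}+\int\mu\,dP_{X,0}-\Psi(\mu,p_X)$, and adding $\Psi(p)-\psi_0$ makes the $\Psi(\mu,p_X)$ terms cancel exactly, leaving $\Psi(p)-\psi_0+P_0D(p,\cdot)=\int\frac{p_M-p_{M,0}}{p_M}(\mu-\mu_0)\,dP_{X,0}$, which is your product-form remainder; the decomposition into $(P_n-P_0)D(p_0,\cdot)$, an equicontinuity term, and $R$ then follows from $P_nD(p^\ast_j,\cdot)=0$, and the submodel enters only through the two $o_P(n^{-1/2})$ terms, which is precisely why $\hat\psi_1$ and $\hat\psi_2$ share the same limit. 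What your approach buys is transparency: it makes visible exactly where each hypothesis is used (the $n^{1/4}$-rates kill $R$ via Cauchy--Schwarz, the Donsker plus $L_2$-consistency conditions kill the equicontinuity term), whereas the paper's citation hides all of this. What it costs is that you must also supply what the citation silently presumes: (a) a lower bound on the estimated propensity $p^\ast_{M,j}$, which the theorem statement does not explicitly assume (only $P(M=1|X)>0$ for the truth is assumed in Section~4.1), and (b) the transfer of the rate, consistency, and Donsker hypotheses from the initial fits $(\mu^0,p_M^0,p_X^0)$ to the converged fits $p^\ast_j$, which you correctly identify as the main remaining obstacle. That transfer is not automatic for an iterated TMLE (the number of iterations and the fitted $\hat\epsilon$'s are data-dependent), and neither the paper nor your sketch actually carries it out; to be fair, the same verification is implicitly required to invoke Theorem 1 of \cite{vanderLaan&Rubin06}, so your version is no weaker than the paper's---it is simply honest about where the gap lives.
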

\begin{proof}
  This results follows from Theorem 1 of \cite{vanderLaan&Rubin06}.
\end{proof}

\bibliographystyle{plainnat}
\bibliography{TMLE}

\end{document}